\newtheorem{Lemma}{Lemma}
\newtheorem{Theorem}{Theorem}
\theoremstyle{plain}
\begin{document}

\title{Resiliency in Distributed Sensor Networks for PHM of the Monitoring Targets}
\author{J. Bahi$^{1}$, W. Elghazel$^{2}$, C. Guyeux$^{1}$, M. Haddad$^{3}$, M. Hakem$^{1}$,\\ K. Medjaher$^{2}$, and N. Zerhouni$^{2}$}


\maketitle

\begin{abstract}
In condition-based maintenance, real-time observations are crucial for on-line health assessment. When the monitoring system is a wireless sensor network, data loss becomes highly probable and this affects the quality of the remaining useful life prediction. In this paper, we present a fully distributed algorithm that ensures fault tolerance and recovers data loss in wireless sensor networks. We first theoretically analyze the algorithm and give correctness proofs, then provide simulation results and show that the algorithm is (i) able to ensure data recovery with a low failure rate and (ii) preserves the overall energy for dense networks.
\end{abstract}

\section{Introduction}

In a monitoring activity, the sensor nodes are placed on/around the monitored target to collect measurements of relevant parameters, such as temperature. These measurements will help evaluate the system's current state of health, diagnose the degree of its severity, and extrapolate the result in the future to estimate when the system is more likely to fail. The goal from this activity is to schedule maintenance in a way that prevents system failure and shutdown. To guarantee the efficiency of this process, the accuracy of on-line measurements is a crucial requirement. Consequently, the Wireless Sensor Network (WSN) used in the monitoring needs to be dependable. 

Avizienis \cite{Avizienis00} defined system dependability as \enquote{the ability of a system to avoid failures that are more frequent or more severe, and outage durations that are longer, than is acceptable to the users}. A dependable network should be able to deliver a correct service (forwards measurements to the base station) and makes sure that failed components will not lead to a network failure. Dependability of WSNs is a property that integrates the attributes needed for the application to be justifiably trusted. These attributes include availability and reliability. 

A network failure can be caused by a number of triggers such as: packet loss, node failure, energy exhaustion, packet interference... The network is considered available if its downtime is very limited, either due to few failures or to quick restarts when a failure takes place \cite{Knight04,Taherkordi06}. A reliable network is able to continuously deliver a correct service. The reliability can be computed as the probability that a network functions properly during a time interval \cite{Taherkordi06,Silva12}. 

Most of the research works solved the problem of network reliability through retransmission and redundancy mechanisms \cite{Silva12}.\\The acknowledgment mechanism is employed by the receiver to notify the sender of the reception status. If the packet fails to arrive to its destination, the sender keeps on re-sending it until the transmission is successful \cite{Akan05,Zhou05,Gungor06,Iyer05}. Unfortunately, this solution does not respect the energy constraints of WSNs, since packet transfer consumes the highest amount of energy in the network. Reliability can also be introduced via data redundancy mechanisms. A packet is transmitted in multiple copies using different routes as a backup plan in case one of the routes fails \cite{Al-Wakeel07,Dijkstra74,Mojoodi11}. However, this solution results in unnecessary transmissions and therefore does not improve energy consumption WSNs. 

In the context of of extending the network's lifetime, a possible solution is to maintain a minimum number of sensor nodes in an active mode \cite{HeCLSS12,HeCYSS12,KasbekarBS11}. Although this seems to solve the energy problem, other issues arise:

\begin{itemize}

\item How can we ensure a minimum coverage rate?
\item How can we reduce the loss of data?
\item How can we avoid unnecessary packet forwards?

\end{itemize}

In this study, the number of awake sensor nodes is kept to a minimum; enough to ensure coverage rate. The probability of nodes awakening is updated following two variables: time and failure rate. Data sensed by a sensor is copied on its neighbors, and will only be retrieved when the active node has failed. this mechanism avoids unnecessary packet forwards and therefore preserves the overall energy. The remainder of this paper is organized as follows. Section \ref{work} presents some of the existing work in WSN reliability. In Section \ref{solution}, we describe our algorithm. The simulation results are shown and discussed in Section \ref{results}.

\section{Related work}
\label{work}

Reliability is an important attribute for WSNs dependability, and it means that the network should be able to continuously deliver a correct service. In order to attain reliability in WSNs, sensing coverage and sensing level need to be considered. The sensing coverage refers to the integrated sensing area which is monitored by at least one sensor node. As for the sensing level, it refers to the number of sensor nodes being able to detect a new event when it takes place \cite{Choi09}. Choi \textit{et al.} argue that existing node scheduling schemes focus on the minimum sensing level for the coverage problem and neglect the fault tolerance issues \cite{Choi09}. In one hand, the minimum sensing level is an NP-complete problem. On the other hand, it cannot be preserved when nodes start to fail. Therefore, the authors propose the Fault-tolerant Adaptive Node Scheduling (FANS) algorithm, which efficiently handles the degradation of the sensing level. The algorithm designates a set of backup nodes for each active node. If the latter fails, the predesignated set of backup nodes activate themselves to replace it and to restore the lowered sensing level. FANS requires a small number of backup nodes and a small amount of control messages. In \cite{Chen12}, Chen \textit{et al.} study fault tolerant out-of-band monitoring for WSNs. They aim at placing a minimum number of monitors in a sensor network in a way that all sensor nodes are monitored by $k$ distinct monitors, and each monitor serves at most $w$ sensor nodes. The authors first prove that this problem is NP-hard and then propose three algorithms providing near optimal solutions. 

Battery level, broken links, and communication failures have an impact on the Quality of Service (QoS) of WSNs. This leads to consequences varying from disturbing the traffic in the network to completely interrupting it. Geeta \textit{et al.} \cite{Geeta13} propose an Active node-based Fault Tolerance using Battery power and Interference model (AFTBI) to identify the faulty nodes in WSNs. Fault tolerance against low battery power is assured through a hand-off mechanism where the faulty node selects the neighbor with highest battery level and transfers all the services towards it. To reduce interference signal, a dynamic power level mechanism is introduced, where the power of a node is adjusted automatically with regards to its current state (active or asleep). Simultaneous transmissions can be avoided if the nodes are only allowed to transmit data within a time slot. Lee and Choi \cite{Lee08} tackle the same problem by identifying and isolating the faulty sensor nodes in the network. Sensed data is compared among neighbors to determine its accuracy. Once the predetermined fault threshold is reached, the node in question is isolated from the diagnosis process; a faulty node can be included in data transferring but not data sensing. Transient faults in communication and sensor reading are tolerated by using the time redundancy mechanism. The drawback of this solution is that faults are assumed to be only related to the sensing activity, excluding other sources of failure.

Energy in WSNs can also be preserved through lifetime optimization. The authors in \cite{Kasbekar11} leverage prediction to prolong the network lifetime, by exploiting temporal-spatial correlations among the data sensed by different sensor nodes. Based on Gaussian Process, the authors formulate the issue as a minimum weight sub-modular set cover problem and propose a centralized and a distributed truncated greedy algorithms (TGA and DTGA). They prove that these algorithms obtain the same set cover. Lifetime optimization using knowledge about the dynamics of stochastic events has been studied in \cite{He12}. The authors presented the interactions between periodic scheduling and coordinated sleep for both synchronous and asynchronous dense static sensor network. They show that the event dynamics can be exploited for significant energy savings, by putting the sensors on a periodic on/off schedule. In \cite{HeCYSS12}, the authors design a polynomial-time distributed algorithm for maximizing the lifetime of the network. They proved that the lifetime attained by their algorithm approximates the maximum possible lifetime within a logarithmic approximation factor. Zhang \textit{et al.} \cite{Zhang14} presented a stochastic sensing algorithm to reduce energy consumption through node scheduling. They used data correlation between nodes to reduce error rate by adjusting duty cycle of faulty sensors. Their algorithm conserves 60 \% of energy as compared to other solutions, while confining sensing error within specified error tolerance. In \cite{He12*}, He \textit{et al.} use actors to allocate spare sensors to sensor-deficient regions or to relocate sensors from sensor-abundant regions to sensor-deficient regions. They introduce a baseline centralized greedy algorithm for sensor allocation, where global sensor information is communicated to obtain the optimal solution. The works cited here focus on a periodic schedule for turning the sensors on and off.

Data collection delay and reliability need to be considered in scheduling algorithms for WSNs. Zhang \textit{et al.} \cite{Zhang12} claim that existing algorithms have not solved these two problems effectively. The authors propose the Fault-Tolerant Scheduling (FTS) algorithm, where each sensor node detects the environment and generates some sensing data at regular intervals. The algorithm helps surviving network malfunction by switching the parent of a sensor node to its backup parent. The simulation results show that FTS has a short data collection time and high fault tolerance. Feng \textit{et al.} \cite{Feng11} considered the problem of efficient data aggregation in WSNs by putting in place amendment strategies in case of failures. Their solution needs local information to repair the aggregation tree and automatically reschedules nodes for interference free aggregation after the amendment. Cheng \textit{et al.} \cite{Cheng13} present STCDG, an efficient data gathering scheme based on matrix completion. STCDG takes advantage of the low-rank feature instead of sparsity, thereby avoiding the problem of having to be customized for specific sensor networks. They exploit the presence of the short-term stability feature in sensor data, which further narrows down the set of feasible readings and reduces the recovery errors significantly. Furthermore, STCDG avoids the optimization problem involving empty columns by first removing the empty columns and only recovering the non-empty columns, then filling the empty columns using an optimization technique based on temporal stability.

To preserve the overall energy in the network, sensor nodes are on a periodic schedule where they are switched on only when the sensing level is decreased. An optimal schedule needs to take nodes failure rate and the elapsed run-time into consideration. When the failure rate is small, wakening the nodes too often would only waste energy. As we go further in time, nodes start to exhaust their energy supply and this is when they start to fail. A combination a node failure rate and elapsed time would give us a better indication of the optimal nodes wakening schedule.\\
Maintaining the sensing level considerably reduces the amount of packet loss, yet it does not completely prevent its occurrence. A sudden node failure will result in the permanent loss of the held data packet, unless a redundancy mechanism is put in place. In the context of reducing energy consumption, the redundancy solution should be avoided and replaced by other solutions which do not include unnecessary packet transmission.\\ 

In this paper, we present a fault-tolerant data collection algorithm. This algorithm preserves energy consumption by only maintaining the necessary set of nodes in the active mode to ensure the minimum coverage level, while considering nodes failure rate. It is also able to recover data loss when a node fails before forwarding the data towards the base station. This algorithm is described in the next section.

\section{The proposed algorithm} 
\label{solution}

To cope with fault tolerance and data survivability, a fully distributed algorithm is presented and theoretically analyzed. Our algorithm seeks to cover data loss by maintaining a necessary set of working nodes and recovering failed ones when needed. We suppose that we are in the case of high density networks, and not all nodes participate in the network's service. Some nodes are in an idle state because their targets are actually covered by working sensors. We consider that these idle sensors wake up periodically to check for eventual node failures and therefore ensure their targets' coverage. In case of failures, they decide to switch to active mode and therefore initiate the recovery process to retrieve the data of the failed nodes. However, during the network's service, how can we handle the case where two (or more) sleeping nodes, would realize at the same time that the working neighbor is down?

Indeed, two neighboring sensor nodes may be elected at the same time step, and the recovery process of two neighboring nodes may be the same. This paper aims at filling this gap by proposing an efficient node failure recovery scheme in order to allow sensor networks to gracefully degrade in performance instead of failing unpredictably.

In the following, we first focus on the legitimate state formulation and next, we present the algorithm which consists in only three rules and give the correctness proofs.

\subsection{Problem formulization}

Let $G=(V,E)$ be the graph modeling the sensor network, with $|V|=n$ and $|E|=m$. We assume sensor node identifiers to be unique. We recall that sensor node identifier is unique if and only if $i.Id \neq j.Id$ holds for each $i,j \in V (i \neq j)$. A sensor node can be in one of these three states: \textit{failed, working,} or \textit{probing}. Every node $i$ in the network has to maintain the following data structure:

\begin{itemize}
\item [-]$D_i$: the sensed data by node $i$. Each time a node updates $D_i$, it sends/replicates the newly sensed data to/on its neighbors.
\item[-]$P_i$: the parity information on node $i$. It is the result of the combination of the replicated information of its neighbors.
\end{itemize}

We considered two different scenarios for the parity information. In the first scenario, there are no memory constraints. Each new data is saved on a different memory register, and we used the SUM function for data collection. In the contrary, all information must be saved on the same memory register when it comes to the second scenario. So, we used the XOR function to preserve memory space.

Let $T={t_1,t_2,..., t_k}$ be the set of monitoring targets to be covered and $S={1, 2,..., n}$ the set of sensor nodes. Each target in $T$ has to be covered by at least one sensor node in $S$. We call $\Gamma_u$ the set of neighbor- sensors of target $t_u , 1 \le u \le k$. Each neighbor-sensor $j \in \Gamma_u$ is capable of monitoring the target $t_u$, formally:

\begin{center}
$\forall j \in \Gamma_u: ds(t_u,j) \le R_s, \Gamma_u \subseteq S, t_u \in T,$
\end{center}

\noindent where $ds(t_u, j)$ denotes the distance between points $t_u$ and sensor $j$.

Let $N_i$ be the initial set of neighbors of node $i$ and $d_i=|N_i \ \backslash \ \Gamma_u(i) \ \backslash \ i|,$ the number of its working neighbors. As the number of failures goes up with time, we let $d^*_i$ be the dynamic number of alive neighbor nodes. We denote by $D^k$ the set of $d_i+1$ replicas of data $D_i$. Also, we denote by $s \ (D^k)$ the sensor node to which data-replica $D^k$ is assigned, for $1 \le k \le d_i+1$ and by $\hat{s} \ (D^k)$ the elected sensor node who recovers $D_i$ if node $i$ fails. The data are replicated on different nodes (space exclusion, see Lemma \ref{l1}) since the goal is to achieve data survivability even if some node failures occur in the network.

\noindent We say that a sensor node $i$ is independent if

$i.state = working \wedge (\forall j \in \Gamma_u(i))(j.state = sleeping \vee probing \vee failed)$

\noindent and that $i$ is dominated if

$(i.state = sleeping \vee probing) \wedge (\exists j \in \Gamma_u(i))(j.state = working)$

\noindent The legitimate state (let us denote it $\Sigma$) of the network is then expressed as follows:

$\forall i \in V: i.state = failed$ \\
\indent $\Rightarrow ((\exists \hat{s}, \hat{s}' \in \Gamma_u(i))(\hat{s}.state = \hat{s}'.state = working) \Rightarrow (\hat{s}(D^k_i) = \hat{s}'(D^k_i)))$\\

In other words, each data loss is recovered by at most one working sensor node.

\subsection{The algorithm}

When a sleeping node wakes up, it sends a \textit{probe-request} message to check if there exist working nodes in its vicinity. If no working nodes, it recovers the lost data of the failed node and starts to operate in the active mode; otherwise, it sleeps again. Nodes are initially in the sleeping mode. Each node sleeps for an exponentially distributed time generated according to a probability density function (PDF) $f(t) = \lambda e^{-\lambda t}$, where $\lambda$ is the probing rate of the sensor node and $t$ denotes its sleeping time duration.

Upon detecting an eventual failure, a probing node $i$ updates its actual probing rate $\lambda_i$ by taking into account the dynamic number of alive neighbors $d^*_i: \lambda^{new}_i \leftarrow \lambda_i \ . \ \frac{d_i}{d^*_i}$. Then, a new sleeping period is generated by using the new computed parameter $\lambda^{new}_i$ according to the PDF function: $f(t) = \lambda^{new} e^{-\lambda^{new}t}$.
The following notations are also given for the predicates of node $i$

\begin{itemize}
\item[-] $W(i)$: working neighbor: $\exists j \in \Gamma_u(i), i.state = working$
\item[-] $W^*(i)$: working neighbor with lower Id: $\exists j \in \Gamma_u(i), j.state = probing \wedge i.Id > j.Id$
\item[-] $F(i)$: failed neighbor: $\exists j \in \Gamma_u(i), j.state = failed$
\item[-] $P^*(i)$: probing neighbor with lower Id: $\exists j \in \Gamma_u(i), j.state = probing \wedge i.Id > j.Id$
\end{itemize}

\noindent The proposed algorithm uses the following three rules:\\
$r1$:

\begin{algorithmic}
\IF{$(i.state = probing \wedge (P^*(i) \vee W(i)))$}
\STATE{\IF{$P^*(i)$}
\STATE{$\lambda^{new}_i \leftarrow \lambda_i . \frac{d_i}{d^{*}_i}$}
\ENDIF}
\STATE{$i.state \leftarrow sleeping$}
\ENDIF
\end{algorithmic}

\begin{figure}[ht!]
	\begin{center}
		\includegraphics[width=0.5\textwidth]{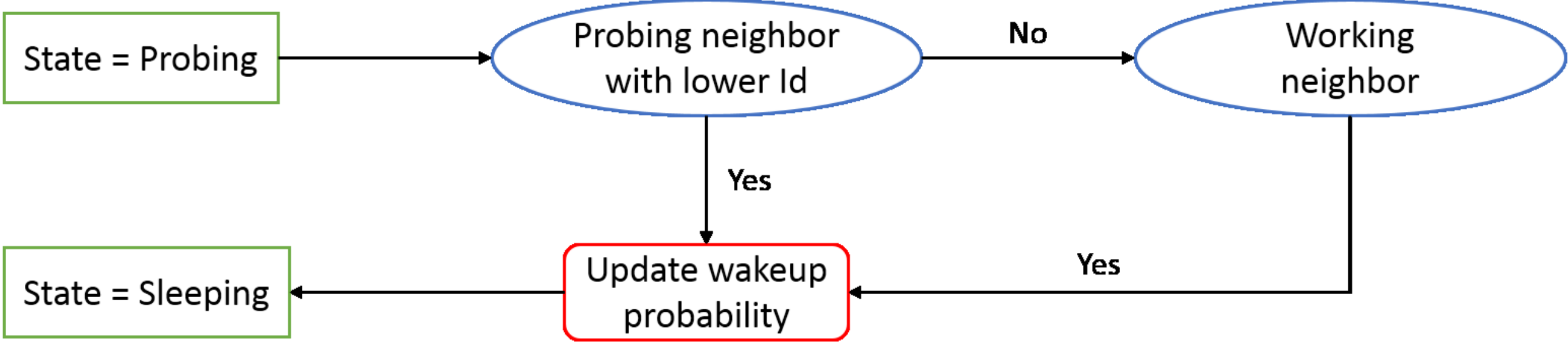}
		\label{r1}
		\caption{Algorithm rule 1.}
	\end{center}
\end{figure}

\indent
\newline
$r2$:

\begin{algorithmic}
\IF{$i.state = probing \wedge (\lnot W(i) \wedge \lnot P^*(i) \vee F(i))$}
\IF{$F(i)$}
\IF{ memory constraint} 
\STATE {$D_i \leftarrow P_z \underset{k \in N_z \backslash \Gamma_u(i), k\neq j} {\oplus} D_k$} \tt{(*$F(i)=F(z)=j$*)}
\ELSE 
\STATE {$D_i \leftarrow  D_k,   k \in N_z \backslash \Gamma_u(i), k\neq j$} \tt{(* $k$ is chosen randomly*)}
\ENDIF
\ENDIF
\STATE{$i.state \leftarrow working$}
\ENDIF
\end{algorithmic}

\begin{figure}[ht!]
	\begin{center}
		\includegraphics[width=0.5\textwidth]{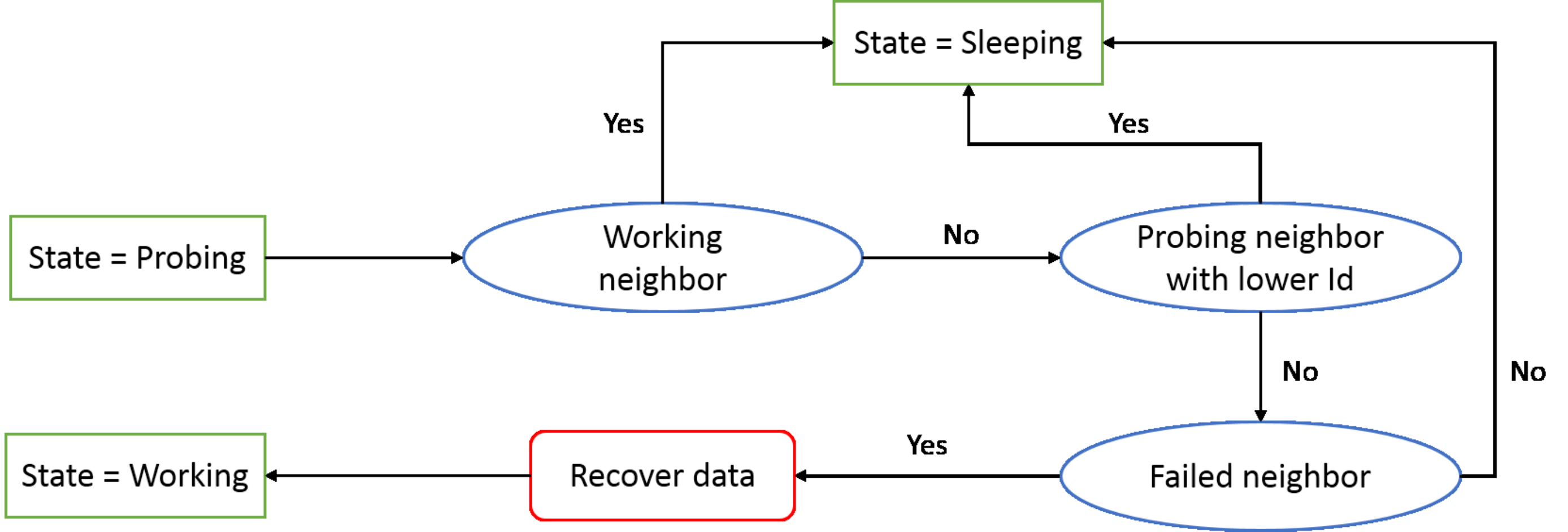}
		\label{r2}
		\caption{Algorithm rule 2.}
	\end{center}
\end{figure}

\indent
\newline
$r3$:

\begin{algorithmic}
\IF{$(i.state = working \wedge W^*(i))$}
\STATE{$i.state \leftarrow sleeping$}
\ENDIF
\end{algorithmic}

\begin{figure}[ht!]
	\begin{center}
		\includegraphics[width=0.5\textwidth]{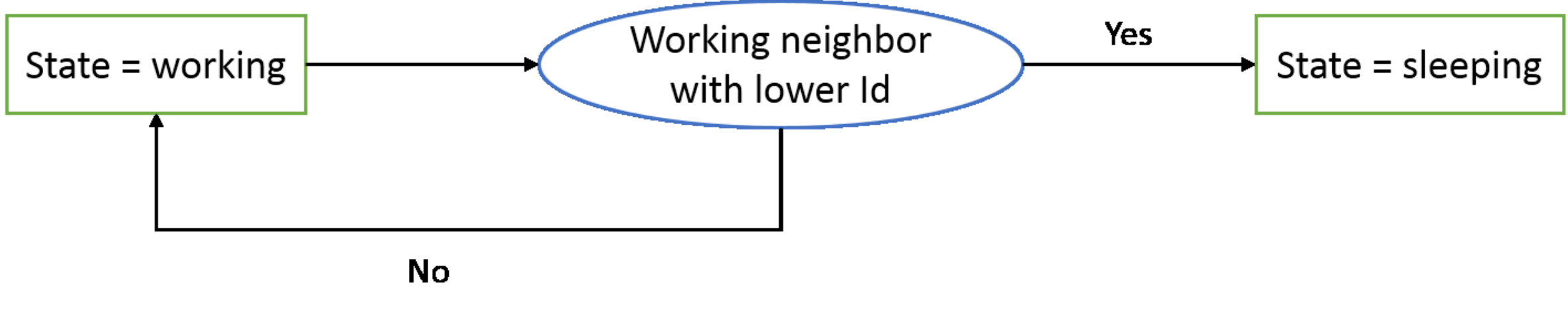}
		\label{r3}
		\caption{Algorithm rule 3.}
	\end{center}
\end{figure}

\subsection{Correctness proofs}

In this section, we detail properties of our fault tolerant algorithm, and express its validity/convergence. We assume that links are trustworthy/flawless and lossless.

\begin{Lemma}
\label{l1}
A sensed data $D_i$ is guaranteed to survive in the presence of $d_i$ permanent faults if and only if $s(D^{k}_i) \neq s(D^{k'}_i)$, for $1 \leq k$, $k' \leq d_i+1$.
\end{Lemma}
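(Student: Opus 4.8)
The plan is to prove the biconditional by two contrapositive-style arguments, treating the data $D_i$ and its $d_i+1$ replicas $D^1_i, \dots, D^{d_i+1}_i$ as the objects of interest. The underlying principle is a pigeonhole / space-exclusion argument: $D_i$ survives a given set of failures precisely when at least one replica sits on a node outside that set, and this can be guaranteed against \emph{every} set of $d_i$ permanent faults if and only if the $d_i+1$ replicas occupy $d_i+1$ distinct nodes.

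First I would prove the ``if'' direction. Assume $s(D^k_i)\neq s(D^{k'}_i)$ for all $1\le k<k'\le d_i+1$, so the replicas are spread over $d_i+1$ pairwise-distinct sensor nodes. Let $F$ be any set of permanent faults with $|F|=d_i$. Since the $d_i+1$ replica-hosting nodes are distinct, $F$ can eliminate at most $d_i$ of them, so at least one node $s(D^k_i)\notin F$ remains alive and still holds a valid copy of $D_i$; hence $D_i$ is recoverable and thus survives. This is the easy direction and is essentially the pigeonhole principle; I would also note that the replication step of the algorithm ($D_i$ being sent to each neighbor, giving $d_i+1$ total copies counting node $i$ itself) is what makes $d_i+1$ the right count.

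Next I would prove the ``only if'' direction by contraposition: suppose the replicas are \emph{not} on distinct nodes, i.e. there exist $k\neq k'$ with $s(D^k_i)=s(D^{k'}_i)$. Then the $d_i+1$ replicas are hosted by at most $d_i$ distinct nodes. Choose $F$ to be exactly that set of (at most $d_i$, hence extendable to exactly $d_i$) hosting nodes; after these permanent faults, \emph{no} node holding a copy of $D_i$ remains, so $D_i$ is irretrievably lost. Hence survival in the presence of all $d_i$-fault patterns fails, establishing the contrapositive.

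The step I expect to require the most care is pinning down the exact semantics of ``survive in the presence of $d_i$ permanent faults'': it must be read as a worst-case (for-all) guarantee over fault sets of size $d_i$, not an existential one, otherwise the ``only if'' direction is false. I would also be careful about whether node $i$ itself counts among the $d_i+1$ hosts and whether a ``permanent fault'' is allowed to hit node $i$; the cleanest reading, consistent with the $d_i+1$ count and with the later definition of $\hat s(D^k_i)$, is that the faults range over the $d_i$ working neighbors plus node $i$, among which any $d_i$ may fail. Once that convention is fixed, both directions are short counting arguments with no real computation.
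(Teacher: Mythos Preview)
Your proposal is correct and follows essentially the same pigeonhole/space-exclusion approach as the paper: both directions are short counting arguments, the ``if'' direction being identical to the paper's. Your ``only if'' direction is in fact more complete than the paper's own argument, which merely observes that if $s(D^{k}_i)=s(D^{k'}_i)=s^*$ and $s^*$ fails then those two replicas are lost, without explicitly exhibiting a full $d_i$-fault set that destroys \emph{all} copies; your construction of $F$ as the (at most $d_i$) distinct hosting nodes closes that gap. Your attention to the worst-case (for-all) reading of ``survive in the presence of $d_i$ permanent faults'' is also well placed and is left implicit in the paper.
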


\begin{proof}
If $d_i$ nodes fail, then there is $s(D^{u}_i), 1 \leq u \leq d_i+1$ which did not fail, and therefore $D^{u}_i$ will be recovered successfully from $s(D^{u}_i)$ since there are $d_i+1$ copies of $D_i$ assigned to $d_i+1$ different nodes. However, if there is a sensor node $s(D^{k}_i), 1 \leq k \leq d_i+1$, such that $s(D^{k}_i) = s(t^{u}_i) = s^*$ and $s^*$ fails, then neither $D^{k}_i$ nor $D^{u}_i$ can be recovered successfully.
\end{proof}

\begin{Lemma}
\label{l2}
If at most $d_i$ neighbors crash down for any sensor node $i \in V$ in the network, then the algorithm is valid and resists to eventual node failures.
\end{Lemma}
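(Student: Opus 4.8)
The plan is to split the claim into its two halves: \emph{resistance to failures} --- no sensed datum is lost as long as no node loses more than $d_i$ of its neighbours --- and \emph{validity} --- from an arbitrary configuration the three rules drive the network into the legitimate state $\Sigma$ and then keep it there. The first half is short. By the space-exclusion hypothesis, $D_i$ is stored as $d_i+1$ replicas $D^1_i,\dots,D^{d_i+1}_i$ sitting on pairwise distinct sensors $s(D^k_i)$, so if at most $d_i$ neighbours of $i$ crash, Lemma~\ref{l1} gives a surviving replica and $D_i$ is still present in the network. When a probing node detects a crashed neighbour $j$ (predicate $F(\cdot)$), $r2$ is the only enabled rule that changes its state, and its recovery step reconstructs the lost datum: under the memory constraint from the parity register $P_z$ of a common neighbour $z$ together with the surviving replicas ($D_j = P_z \oplus \bigoplus_{k} D_k$), and without the constraint by copying a surviving replica directly. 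Hence every datum held by a failed node is recovered by some working node.

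For validity I would argue closure and then convergence. Closure: if $\Sigma$ holds and a node $j$ crashes, the nodes that may become working through $r2$ are exactly the probing neighbours of $j$; but $r1$ re-puts to sleep every probing node that has a working neighbour or a probing neighbour of smaller Id, and $r3$ does the same for a working node shadowed by a smaller-Id neighbour, so --- Ids being unique --- exactly one contender (the smallest-Id one) survives these re-sleep actions and applies $r2$. Thus the target of $j$ is recovered by a single working node, $\hat s(D^k_j)=\hat s'(D^k_j)$ still holds, and $\Sigma$ is preserved. Convergence: attach to each configuration the lexicographic measure $\bigl(\#\{\text{uncovered targets}\},\ \#\{\text{over-covered targets}\},\ \text{sorted list of Ids of enabled nodes}\bigr)$ and check that each application of $r1$, $r2$ or $r3$ strictly decreases it --- $r2$ removes an uncovered target, while $r1$ and $r3$ remove a redundant probing/working node and, by the strict Id comparison, cannot be reversed by a neighbour of larger Id. When the measure cannot decrease further, no rule is enabled, and the definitions of \emph{independent} and \emph{dominated} then force every target to be covered by exactly one working node and every failed node's datum to have been recovered exactly once; i.e. the configuration is in $\Sigma$.

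I expect the convergence step to be the main obstacle. The awkward case is a livelock in which two neighbouring sleeping sensors wake at the same step and both detect the same failed working node: one must show that the Id comparisons hidden in $P^*(\cdot)$ and $W^*(\cdot)$, together with the re-sleep actions of $r1$ and $r3$, break this symmetry in a bounded number of steps, and that the probing-rate update $\lambda^{new}_i \leftarrow \lambda_i\,\frac{d_i}{d^{*}_i}$ inside $r1$ never re-enables a rule. Producing a variant function that decreases along \emph{every} interleaving --- not just along fair executions --- is where the real care is needed; the lexicographic measure above is the natural candidate, but its monotone decrease through the $r2 \to r3$ cascades has to be verified with some care.
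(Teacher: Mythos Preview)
Your first half --- data survivability via Lemma~\ref{l1} and space exclusion --- is exactly what the paper does, and in fact that is \emph{all} the paper does for Lemma~\ref{l2}. Its entire proof reads: the algorithm replicates each $D_i$ onto $d_i+1$ distinct nodes (space exclusion), at most $d_i$ of them fail, so by Lemma~\ref{l1} one copy survives on a fault-free node. That is the whole argument.

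Your second half is therefore not a gap in the paper's reasoning but an over-reading of the word ``valid'' in the statement. You interpret it as ``converges to the legitimate state $\Sigma$'' and then build a closure-plus-convergence argument with a lexicographic variant function. The paper, by contrast, treats ``valid'' here as synonymous with ``the replication scheme works under the stated fault bound'', and defers all the convergence machinery to Lemmas~\ref{l3}--\ref{l6} and Theorem~\ref{t1}. Your lexicographic measure and the Id-based symmetry-breaking analysis are essentially a compressed re-derivation of those later results; they are not needed at this point and the difficulties you flag (monotone decrease through $r2\to r3$ cascades, livelock between simultaneously waking neighbours) are precisely what Lemmas~\ref{l3} and~\ref{l4} are set up to handle one step at a time. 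So your proposal is not wrong, but for Lemma~\ref{l2} itself you should prune it down to the three-line survivability argument and move the convergence discussion to where the paper actually proves it.
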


\begin{proof}
The proposed algorithm is based on replication scheme with space exclusion. Thus, according to Lemma \ref{l1}, each data is replicated $d_i+1$ times onto $d_i+1$ distinct sensor nodes. We have at most $d_i$ node failures at the same time. So at least one copy of each data is recovered from a fault free node.
\end{proof}

\begin{Lemma}
\label{l3}
If a node changes to the working state by $r2$, then it remains in its state and will never execute a rule again until an eventual failure takes place.
\end{Lemma}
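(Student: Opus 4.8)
The plan is to eliminate, rule by rule, the possibility that $i$ moves again once it has switched to \emph{working} through $r2$. The observation that carries most of the argument is that $r1$ and $r2$ are both guarded by $i.state = probing$: the instant $i.state = working$, both guards are falsified at $i$, and they can become enabled again only if $i$ first leaves the working state. The sole rule that can take $i$ out of \emph{working} is $r3$ (a crash aside --- which is precisely the escape clause in the statement), so it is enough to show that the guard of $r3$, namely $i.state = working \wedge W^*(i)$, is never met afterwards; equivalently, that no neighbour $j \in \Gamma_u(i)$ with $j.Id < i.Id$ is ever in the working state while $i$ is working.

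I would prove this in two steps. \emph{Step 1: no working neighbour at the switching instant.} When $i$ executed $r2$ its guard held, and (reading the guard so that $\lnot W(i)$ is required for $i$ to become working --- which matches the intent that $i$ activates only when its target would otherwise be left uncovered) this gives $\lnot W(i)$ at that step. \emph{Step 2: no neighbour becomes working afterwards.} A node reaches the working state only via $r2$, whose guard requires the absence of a working neighbour; but for each $j \in \Gamma_u(i)$, node $i$ is itself a neighbour of $j$, and as long as $i.state = working$ the predicate $W(j)$ holds, so $r2$ is disabled at $j$ throughout this period. A failed neighbour stays failed (faults are permanent), and a sleeping or probing neighbour cannot reach \emph{working} without $r2$. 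Steps 1 and 2 together say that the set of working nodes in $\Gamma_u(i)$ remains empty, so $W^*(i)$ stays false, $r3$ is never enabled at $i$, and $i$ makes no further move until it crashes.

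The delicate point --- and the one I would treat with care --- is the interaction with simultaneous moves: if $i$ and a lower-Id neighbour $j$ both apply $r2$ in the same step, each sees no working neighbour, both become working, and then $W^*(i)$ holds so $i$ would fire $r3$. I would resolve this by reading the hypothesis ``changes to the working state by $r2$'' as ``\emph{stably} changes'' (survives the round in which $r2$ is applied) and by using $r3$ itself as the tie-breaker: in any neighbourhood at most the locally minimal-Id contender stays working, and that node satisfies the hypotheses of Steps 1--2, hence the conclusion. All the remaining bookkeeping --- that $r3$ cannot be triggered by a merely \emph{probing} neighbour, that recomputing $\lambda_i$ does not alter $i.state$, and so on --- is immediate from the form of the three rules. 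I expect Step 2 (the monotonicity claim that nothing can re-enable $r2$ in the neighbourhood of a stably working node) to be the only part that needs genuine care.
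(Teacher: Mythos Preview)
Your argument is essentially the paper's own: once $i$ is \emph{working}, only $r3$ could fire, and $r3$ needs a working neighbour, which is impossible because any $j\in\Gamma_u(i)$ sees $W(j)$ true (witnessed by $i$) and hence cannot execute $r2$. The paper's proof is exactly this three-line sketch; your Step~1/Step~2 decomposition and your treatment of the simultaneous-$r2$ race are more careful than what the paper actually writes, but the core idea is identical.
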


\begin{proof}
Let $i$ be a sensor node that executes $r2$. According to the preconditions of all rules, node $i$ can execute only rule $r3$ in the next round. However, in order to do so, one of its neighbors would have to switch to $working$ state following $r2$. This is impossible as long as node $i$ is in the $working$ state. Thus, node $i$ will never execute a rule again. If node $i$ is down, it remains in its state (fail-stop failure). 
\end{proof}

\begin{Lemma}
\label{l4}
If a sensor node is enabled by rule r2, then each one of its neighbors will execute at most one more rule until their next wake-up/probing, and this rule will be r1.
\end{Lemma}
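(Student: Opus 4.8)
The plan is to argue by a case analysis on the state of an arbitrary neighbor $j \in \Gamma_u(i)$ at the moment node $i$ becomes $working$ via $r2$, and to show that in every case the only rule $j$ can subsequently fire before it goes back to sleep (or probes afresh) is $r1$. First I would fix the situation: $i.state = working$ and, by Lemma \ref{l3}, $i$ stays $working$ and never fires a rule again until it fails; in particular $W(j)$ holds for every $j \in \Gamma_u(i)$ for the whole period under consideration. Now split on $j.state$. If $j$ is $sleeping$, it does nothing until it wakes, at which point it becomes $probing$ and, since $W(j)$ holds, its next action is exactly $r1$ (the branch with $P^*(j)$ possibly updating $\lambda_j$, then $j.state \leftarrow sleeping$) — one rule, and it is $r1$, as claimed. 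If $j$ is already $probing$, then because $W(j)$ holds the guard of $r2$ is false for $j$ (the disjunct $\lnot W(j)\wedge\lnot P^*(j)$ fails, and $F(j)$ would require a $failed$ neighbor, which I handle below), so $j$ cannot go $working$; the only enabled rule is $r1$, which sends it to $sleeping$. If $j$ is $working$, then $j$ can only fire $r3$, which requires $W^*(j)$, i.e. a $probing$ neighbor of lower Id; firing $r3$ returns $j$ to $sleeping$, and thereafter we are back in the $sleeping$ case whose next rule is $r1$ — but notice $r3$ here is a rule executed "until their next wake-up", so I must be careful about the exact counting.

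The delicate point — and what I expect to be the main obstacle — is the $F(j)$ disjunct in the guard of $r2$, i.e. the possibility that a neighbor $j$ of $i$ sees some \emph{other} failed node and would itself try to execute $r2$ to recover that node's data, which is a genuine rule application and not $r1$. To rule this out I would use the structure of the legitimate-state definition $\Sigma$ together with the replica-placement discipline (space exclusion, Lemma \ref{l1}): the node $i$ just recovered via $r2$ because of a failed node $z$ with $F(i) = F(z) = j'$ in the notation of the rule, and the recovery assignment $\hat{s}(D^k)$ is single-valued, so once $i$ has taken responsibility no second neighbor is still "enabled by $F$" for that same failure; any remaining $F(j)$ for a neighbor $j$ would have to come from a distinct failure, which is outside the scope of the hypothesis that it is precisely $i$'s activation we are tracking. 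I would therefore state explicitly the standing assumption (already implicit in the surrounding text) that we follow a single activation event, so that no new failure intervenes, and conclude that the $F(j)$ branch is vacuous for every $j \in \Gamma_u(i)$ during the window in question.

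Having disposed of that case, the remaining bookkeeping is routine: collect the three sub-cases, observe that each neighbor fires at most one rule and that this rule is $r1$ (the $r3$ step in the $working$ sub-case happens only if $j$ was already enabled for $r3$ independently of $i$'s activation, and I would note that the lemma's phrasing "until their next wake-up/probing" is meant to bracket exactly the post-activation $r1$ step). I would close by remarking that this is the key stepping stone toward convergence: Lemma \ref{l4} says an $r2$-activation is locally "quiescent" — it forces every neighbor into the harmless $r1$ transition — which, combined with Lemma \ref{l3}, prevents cascading or oscillating rule executions around a freshly activated node.
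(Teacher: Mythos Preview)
Your case analysis on the state of a neighbor $j$ is the right shape, but you are missing the paper's key observation, and this omission creates the very difficulty you flag in the $working$ sub-case. The paper does not argue from the \emph{post}-condition that $W(j)$ holds once $i$ is $working$; it argues from the \emph{pre}-condition of $r2$ at the instant $i$ fires it. That guard contains $\lnot W(i) \wedge \lnot P^*(i)$, so at that moment (a) no neighbor of $i$ is in the $working$ state, and (b) every neighbor in the $probing$ state has a strictly higher Id than $i$. Point (a) eliminates your $working$ sub-case outright, so the awkward bookkeeping about whether an $r3$ step counts against the ``at most one more rule'' budget never arises. Point (b) is what actually forces a probing neighbor $j$ into $r1$: since $j.Id > i.Id$, the predicate $P^*(j)$ (or, once $i$ has switched, $W(j)$) is satisfied, so $r1$'s guard holds for $j$.

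Your extended treatment of the $F(j)$ disjunct is also more than the paper does. The paper's proof simply partitions the neighbors into $sleeping$, $probing$ (with higher Id), and $failed$, and disposes of each in a line; it does not invoke $\Sigma$, the single-valuedness of $\hat s$, or any ``single activation event'' assumption. Whether or not one finds the paper's silence on $F(j)$ fully satisfying, your appeal to Lemma~\ref{l1} and to the legitimate-state specification to rule it out is not how the paper argues, and it introduces hypotheses (no new failure intervenes) that are neither stated nor needed in the paper's version. In short: replace your post-hoc $W(j)$ reasoning by the direct use of $r2$'s precondition, drop the $working$ case entirely, and your argument collapses to the paper's three-line proof.
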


\begin{proof}
Let $i$ be a node that executes $r2$. When node $i$ changes to $working$ state, all its neighbors are either in $sleeping$, $probing$, or $failed$ state. So, we have three possible scenarios: i) neighbors in $sleeping$ state: there is no conflict in this case. ii) neighbors with $probing$ state: these neighbors have a higher $Id$ than $i$. iii) failed neighbors will remain in their state until their recovery.
\end{proof}

\begin{Lemma}
\label{l5}
Every sensor node is either independent, dominant, or failed.
\end{Lemma}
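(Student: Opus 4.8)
The plan is to argue by an exhaustive case analysis on the state of an arbitrary sensor node $i$, using the standing assumption that the network is in (or has converged to) the legitimate state $\Sigma$, so that \emph{no rule among $r1,r2,r3$ is enabled at any node} and the coverage requirement of $\Sigma$ holds. Since a node is always exactly one of \emph{failed}, \emph{working}, \emph{sleeping}, or \emph{probing}, it suffices to handle these four cases and collapse them onto the three claimed classes.

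First, if $i.state = failed$, then $i$ is \emph{failed} by definition and there is nothing to prove. Second, suppose $i.state = working$; I would show $i$ is \emph{independent}. If not, there is a neighbor $j \in \Gamma_u(i)$ with $j.state = working$. Because identifiers are unique, one of $i,j$ carries the larger identifier; that node then satisfies the precondition $W^*(i)$ of rule $r3$, so $r3$ is enabled — contradicting stability. Hence every neighbor of $i$ is in $\{sleeping, probing, failed\}$, which is exactly the definition of $i$ being independent.

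Third, suppose $i.state \in \{sleeping, probing\}$; I would show $i$ is \emph{dominated}, i.e. $(\exists j \in \Gamma_u(i))(j.state = working)$. If $i.state = probing$ and no neighbor is working, then either some probing neighbor has a smaller identifier, so $P^*(i)$ holds and $r1$ is enabled, or there is no such neighbor, so $\lnot W(i)\wedge\lnot P^*(i)$ holds and $r2$ is enabled; both contradict stability, so some neighbor must be working. If $i.state = sleeping$ and no sensor of $\Gamma_u(i)$ is working, then the target $t_u$ monitored by this group is covered by no active sensor, contradicting the coverage condition carried by the legitimate state $\Sigma$. In either sub-case $i$ has a working neighbor and is therefore dominated, which completes the case analysis.

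The main obstacle I anticipate is the \emph{sleeping} sub-case: the formalization of $\Sigma$ given in the excerpt records only the data-recovery consistency clause and does not explicitly restate the "each target covered by at least one working sensor" invariant, so I would first have to make that coverage invariant explicit and note that it is an invariant of $r1$--$r3$ (rule $r2$ only ever turns a probing node into a working one when its target is about to become uncovered, and $r1,r3$ never uncover a target whose sole guard is the node being put to sleep). Once that invariant is in hand, the remaining verifications are just routine checks of the rule preconditions.
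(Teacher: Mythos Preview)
Your approach matches the paper's: a case analysis on the state of an arbitrary node, showing that any node not falling into one of the three classes would have an enabled rule. The paper's proof is terser and treats the \emph{sleeping} and \emph{probing} cases together, simply asserting that a non-dominated node in either state ``may execute rule $r2$''; it does not separate out the sleeping sub-case or appeal to any coverage invariant. Your observation that a sleeping node cannot literally fire $r2$ (whose guard requires $i.state = probing$) is well taken, and your plan to close this via the coverage invariant is a reasonable patch --- in this respect your argument is actually more careful than the original.
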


\begin{proof}
From the point of view of node $i$, we have three scenarios:

\begin{itemize}
\item[-] if node $i$ is in the $working$ state and is not $independent$, then $i$ may execute rule $r3$.
\item[-] if node $i$ is in the $sleeping \vee probing$ state and is not $dominated$, then node $i$ may execute rule $r2$.
\item[-] if node $i$ is in the $failed$ state, then node $i$ will remain in its state until its recovery.
\end{itemize}

\end{proof}

\begin{Lemma}
\label{l6}
When a node is not failed $\vee$ sleeping, it can make at most two moves.
\end{Lemma}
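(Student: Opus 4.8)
The plan is to argue directly from the structure of the three rules, regarded as a small transition system on the four states \textit{failed}, \textit{sleeping}, \textit{probing}, and \textit{working}. Since node $i$ is, by hypothesis, neither \textit{failed} nor \textit{sleeping}, it is currently \textit{working} or \textit{probing}, and by Lemma~\ref{l5} it is accordingly independent or dominant. The first step is to record the effect of each rule on the node that executes it: $r1$ is enabled only in state \textit{probing} and moves the node to \textit{sleeping}; $r2$ is enabled only in state \textit{probing} and moves the node to \textit{working}; $r3$ is enabled only in state \textit{working} and moves the node to \textit{sleeping}. In particular no rule is enabled when the node is \textit{sleeping} or \textit{failed}: it leaves those states only through a spontaneous wake-up or recovery, which is not a rule move and which begins a fresh activity period not counted here.

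Next I would carry out a short case analysis on the current state of $i$. If $i$ is currently \textit{working}, the only rule whose guard can hold is $r3$; executing it sends $i$ to \textit{sleeping}, where no rule applies, so $i$ makes at most one move. If $i$ is currently \textit{probing}, the only rules that can be enabled are $r1$ and $r2$; should $r1$ fire, $i$ goes to \textit{sleeping} and is finished after one move, whereas if $r2$ fires, $i$ goes to \textit{working} and we are reduced to the previous case, giving at most one further move, necessarily $r3$. Hence in every case $i$ performs at most two moves, the extremal case being the chain $r2$ (\textit{probing} $\to$ \textit{working}) followed by $r3$ (\textit{working} $\to$ \textit{sleeping}); Lemmas~\ref{l3} and~\ref{l4} may be invoked to confirm that once $i$ is \textit{working} its neighbourhood cannot re-enable it, so no longer chain is possible (indeed the $r2$ branch already terminates after a single move, so the bound of two is merely an upper estimate, not tight).

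I expect the only genuinely delicate point to be the bookkeeping convention rather than any combinatorial difficulty: one must state explicitly that a \emph{move} is one rule application and that the count runs over a single activity period, i.e.\ up to the moment $i$ next becomes \textit{sleeping} or \textit{failed}, since otherwise repeated wake-ups would make the number of moves unbounded. Once that is pinned down, the proof is just a traversal of the three-rule automaton, and all that remains to check is that $r2$ is the unique rule not sending the executing node straight to \textit{sleeping}, that $r2$ is enabled only in \textit{probing} and outputs \textit{working}, and that from \textit{working} only $r3$ --- again landing in \textit{sleeping} --- can follow.
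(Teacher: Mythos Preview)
Your proposal is correct and follows essentially the same approach as the paper: invoke Lemmas~\ref{l3} and~\ref{l4} and do a short case analysis on the rule structure to see that a node can fire at most two rules. Your write-up is more explicit about the underlying four-state automaton and about the bookkeeping convention (what counts as a move, and over which activity period), which is an improvement over the paper's two-line argument.

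One point worth flagging: the paper names the extremal two-move chain as ``$r3$ then $r2$'' (starting from \textit{working}), whereas you name it as ``$r2$ then $r3$'' (starting from \textit{probing}). Your version is the one that makes sense as a chain of rule applications without an intervening wake-up, and you correctly observe that by Lemma~\ref{l3} even this chain does not actually occur, so the bound of two is slack. The paper's ordering only makes sense if one silently allows an intermediate \textit{sleeping}$\to$\textit{probing} transition between $r3$ and $r2$; either way the bound of two holds, and your more careful accounting of what a ``move'' is resolves the ambiguity cleanly.
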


\begin{proof}
By Lemma \ref{l3} and Lemma \ref{l4}, each rule can be executed at most once by a node. Hence, the only case a node makes two moves is when it executes $r3$ then $r2$ with a $working$ state.
\end{proof}

\begin{Theorem}
\label{t1}
With respect to the legitimate state $\Sigma$ of the network, the proposed algorithm converges within 2n moves.
\end{Theorem}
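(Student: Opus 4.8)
The plan is to obtain the bound by a per-node amortized count and then to check that any configuration in which no rule is enabled is precisely the legitimate state $\Sigma$; the move bound together with termination then yields convergence within $2n$ moves. First I would argue that every node executes at most two rules over the entire run. By Lemma~\ref{l3}, a node that performs $r2$ reaches the $working$ state and is thereafter disabled for all rules (the only rule it could subsequently be enabled by is $r3$, which requires a neighbor to move to $working$ via $r2$, impossible while the node itself stays $working$). By Lemma~\ref{l4}, once some node is enabled by $r2$ each of its neighbors performs at most one further move, and that move is $r1$. By Lemma~\ref{l5} every node is, at any time, independent, dominated, or failed, so the only rules a node can ever be enabled by are $r1$ (when probing and dominated / lower-$Id$-deferring), $r2$ (when probing and not dominated), and $r3$ (when working and not independent); Lemma~\ref{l6} packages these into the statement that a non-failed, non-permanently-sleeping node makes at most two moves, the extremal case being $r3$ followed by $r2$. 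Summing this per-node bound of $2$ over the $n$ nodes of $V$ gives at most $2n$ moves, hence the algorithm terminates.

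Next I would show that a terminal configuration—one in which no node has an enabled rule—satisfies $\Sigma$. With all guards false: no working node satisfies the guard of $r3$, so every working node is independent; no sleeping/probing node satisfies the guard of $r2$, so every such node is dominated (has a $working$ neighbor); and failed nodes stay failed (fail-stop). It remains to verify the recovery-uniqueness clause of $\Sigma$: for a failed node $i$ with two working neighbors $\hat s,\hat s' \in \Gamma_u(i)$, we need $\hat s(D^k_i)=\hat s'(D^k_i)$. This follows from the $Id$-based tie-breaking built into $P^*$/$W^*$ in rules $r1$ and $r3$: among the probing neighbors that detect the failure of $i$ and contend to recover $D_i$ via $r2$, all but the lowest-$Id$ one are forced back to $sleeping$ by $r1$ (and any working node seeing a lower-$Id$ probing recoverer steps down via $r3$), so in the terminal configuration at most one neighbor has assumed responsibility for $D_i$; combined with the space-exclusion replication of Lemma~\ref{l1}/Lemma~\ref{l2}, the recovered value is well defined and unique. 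Hence the terminal configuration is in $\Sigma$, and since it is reached in at most $2n$ moves, the theorem follows.

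The main obstacle I anticipate is making the ``at most two moves per node'' count robust to the asynchronous, distributed scheduling: one must rule out a cascade in which a node's $r3$-then-$r2$ sequence re-enables $r1$ or $r3$ at neighbors that had already stabilized, thereby inflating the total beyond $2n$. The way to close this is to observe, via Lemmas~\ref{l3} and~\ref{l4}, that the $r2$ step is terminal for its executor and triggers only a single $r1$ at each neighbor, and that $r1$ is itself self-disabling (a node returning to $sleeping$ cannot be re-enabled for $r1$ without first waking and re-probing, which is outside the stabilization run); so the per-node budget of two is never exceeded regardless of interleaving. A secondary point needing care is confirming that the predicate definitions $W,W^*,F,P^*$ indeed make the three guards jointly exhaustive-complement of the $\Sigma$ conditions, so that ``no rule enabled'' is equivalent to ``legitimate,'' rather than merely implied by it.
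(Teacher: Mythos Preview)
Your proposal is correct and follows essentially the same route as the paper: the paper's entire proof is the single sentence ``This follows from Lemma~\ref{l1} to Lemma~\ref{l6},'' and your argument is a faithful (and considerably more explicit) unpacking of that chain, with the $2n$ bound coming from Lemma~\ref{l6} summed over $V$ and the legitimacy of the terminal configuration read off from Lemmas~\ref{l3}--\ref{l5} together with the $Id$-based tie-breaking. The extra care you take in verifying that ``no rule enabled'' implies $\Sigma$ and in addressing possible re-enabling cascades goes beyond what the paper spells out, but it does not depart from the intended line of reasoning.
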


\begin{proof}
This follows from Lemma \ref{l1} to Lemma \ref{l6}.
\end{proof}

\subsection{Message complexity analysis}

In the following, we give an Upper-Bound of the actual number of probe/reply messages exchange during the network's lifetime task.

\begin{Theorem}
\label{t2}
The number of probe/reply messages involved by the algorithm is at most: $$O \left( n\ m \times \max_i \frac{t^{R_i}_i}{\Delta_i}\right), 1 \leq i \leq n $$
where, n is the number of nodes, m is the number of virtual communication links, $t^{R_i}_i$ is the reliable lifetime of node i and $\Delta_i$ is the smallest sleeping period time of node i.\\
This bound is attainable.
\end{Theorem}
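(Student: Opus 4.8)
The plan is to bound separately (i) how many probe/reply episodes a single node can initiate over its reliable lifetime, and (ii) how many messages a single probing episode costs, and then sum over all nodes. For (i): a node $i$ sleeps for an exponentially distributed duration before each probe, and the key observation is that the actual sleeping periods it generates are always at least $\Delta_i$, the smallest sleeping period time of node $i$. Hence over the reliable lifetime $t^{R_i}_i$ the node can wake up and probe at most $\lceil t^{R_i}_i / \Delta_i \rceil$ times; after $t^{R_i}_i$ the node is no longer guaranteed to function and we stop counting. Note that the rate-update rule $\lambda^{new}_i \leftarrow \lambda_i \cdot d_i/d^*_i$ only increases $\lambda_i$ (since $d^*_i \le d_i$), which shortens sleeping periods and therefore cannot increase the count beyond what $\Delta_i$ already controls — but it is exactly this $\Delta_i$ that makes the bound finite, so I would emphasize that $\Delta_i$ is well-defined precisely because $d^*_i \ge 1$ as long as the node is alive.

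For (ii): when node $i$ wakes up it broadcasts a \textit{probe-request} along its communication links and each recipient may send back a \textit{reply}. Charging this to the virtual communication links, a single probing episode of node $i$ touches at most $O(m)$ link-traversals in the worst case when we account for the cascade of rule executions ($r1$, $r2$, $r3$) that one probe can trigger among neighbors; by Lemma \ref{l6} each awakened node makes at most two moves, so the cascade is bounded and the per-episode cost is $O(m)$. Multiplying the per-episode cost $O(m)$ by the number of episodes $O\!\left(t^{R_i}_i/\Delta_i\right)$ and summing over the $n$ nodes gives
$$\sum_{i=1}^{n} O\!\left( m \cdot \frac{t^{R_i}_i}{\Delta_i}\right) \;=\; O\!\left( n\, m \times \max_i \frac{t^{R_i}_i}{\Delta_i}\right),$$
which is the claimed bound. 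Attainability would be argued by exhibiting a dense configuration (e.g. a near-complete graph so $m = \Theta(n^2)$) in which every node repeatedly wakes at its minimum period, detects a failed neighbor, and triggers a full broadcast-and-reply round, so that each of the $n$ nodes genuinely incurs $\Theta\!\left(m \cdot t^{R_i}_i/\Delta_i\right)$ messages.

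The main obstacle I expect is making the per-episode $O(m)$ charge rigorous: a single probe can, through rule $r2$, wake a node that itself was probing, and one must be careful that this does not recursively generate a super-linear number of fresh probe messages within the same episode. I would handle this by appealing to Lemma \ref{l3} and Lemma \ref{l4} — once a node goes \textit{working} via $r2$ it never executes another rule until a new failure, and each of its neighbors executes at most one further rule ($r1$) — so within one episode the set of rule executions is bounded by $O(n)$ moves and the induced message traffic by $O(m)$ link-traversals, with no unbounded recursion. A secondary subtlety is the definition of $\Delta_i$: since sleeping periods are drawn from a continuous exponential distribution they have no deterministic positive lower bound, so $\Delta_i$ must be read as the smallest period actually realized during the run (or a designed floor on the PDF), and the theorem is a bound in terms of that realized quantity; I would state this interpretation explicitly before the counting argument.
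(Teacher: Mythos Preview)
Your upper-bound argument reaches the right conclusion but is over-engineered compared with the paper. The paper does not invoke any cascade analysis or Lemmas~\ref{l3}--\ref{l6}: a single probing episode of node $i$ is just one broadcast plus at most $|N_i|$ replies, so its cost is $O(|N_i|)$, not $O(m)$. Summing $\sum_{i} |N_i|\cdot t^{R_i}_i/\Delta_i$ and then crudely bounding $|N_i|\le m$ gives the stated $O(n\,m\times\max_i t^{R_i}_i/\Delta_i)$. Your worry about a probe recursively triggering further probes within the same episode is unfounded in this model: rule executions change local state but do not spawn new probe-request broadcasts; fresh probes come only from the next wake-up, which is already accounted for in the $t^{R_i}_i/\Delta_i$ count. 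So the appeal to Lemmas~\ref{l3}, \ref{l4}, \ref{l6} is unnecessary (though not wrong) for the upper bound.

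Your attainability argument, however, has a genuine gap. In a near-complete graph each probe of node $i$ touches only $|N_i|=n-1$ links, whereas $m=\Theta(n^2)$; so a single episode costs $\Theta(n)=\Theta(\sqrt{m})$, not $\Theta(m)$, and the grand total is $\Theta(n^2\cdot\max_i t^{R_i}_i/\Delta_i)=\Theta(m\cdot\max_i t^{R_i}_i/\Delta_i)$, a factor of $n$ below the claimed bound. The $n\,m$ factor is tight only when the per-node bound $|N_i|\le m$ is itself tight, i.e.\ when $m$ is tiny. The paper's construction is exactly this extreme: a two-node chain ($n=2$, $m=1$) with $s_1$ working and $s_2$ probing, equal reliable lifetimes, and constant sleeping period $\Delta_2$; then $s_2$ issues $t^{R_2}_2/\Delta_2$ probes, each answered once by $s_1$, for a total of $2\cdot t^{R_2}_2/\Delta_2 = n\,m\cdot\max_i t^{R_i}_i/\Delta_i$. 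Replace your dense-graph sketch with this sparse example and the attainability goes through.
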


\begin{proof}

(a)- The reliable lifetime $t^{R_i}_i$, of the node $i, 1\leq i\leq n$ for a specified reliability $R_i$, starting the mission at age $0$, is computed as follows:\\
$R_i = 1- F(t^{R_i}_i) = e^{-\lambda t^{R_i}_i} \Rightarrow ln\ R_i = -\lambda t^{R_i}_i \Rightarrow t^{R_i}_i = -\frac{1}{\lambda}\ ln\ R_i$\\
This is the lifetime during which the sensor node $i$ will be functioning successfully with a reliability of $R_i$.\\
According to node's sleeping periods subdivisions of the time, we have:\\
$0 = t_o < t_1 < t_2 < ... < t_k = t.$ Let $\Delta_p = [t_{p-1}, t_p[, 1 \leq k$ denote the $p^{th}$ sleeping period time. Since the number of failures goes up, the sleeping time period decreases with time. This implies that the probing process of node $i$ costs at most $O\left(\frac{t^{R_i}_i}{\Delta_i}\right), 1 \leq i \leq n, \Delta_i = min \Delta_p, 1 \leq p \leq k$. In addition, for each probing message issued from node $i$, we may have the corresponding reply messages from its working neighbors. This cost is at most $O(|N_i|)$. Therefore, from the point of view of node $i$, the number of probe/reply messages is at most $O \left(|N_i| \times \frac{t^{R_i}_i}{\Delta_i}\right)$.\\
Finally, summing up for the whole $n$ sensor nodes, the algorithm's message cost is at most $O\left(\Sigma^{n}_{i=1}|N_i|\times \frac{t^{R_i}_i}{\Delta_i}\right) \leq O\left(n\ m\times max_i \frac{t^{R_i}_i}{\Delta_i}\right), 1\leq i\leq n$ \\

(b)- To see that this bound is really attainable, consider a linear chain graph of only two sensor nodes $s_1$ and $s_2 (n=2)$. We need to orchestrate the involved communications between these nodes in time. Assume that $s_1$ is working and $s_2$ is the passive state. If $t^{R_1}_1 = t^{R_2}_2$ ($s_1$ and $s_2$ start functioning and fail at the same time), then the whole number of probe-message issued from $s_2$ is $\frac{t^{R_2}_2}{\Delta_2}$, where $\Delta_2$ is the constant sleeping time period of $s_2$. Since both $s_1$ and $s_2$ have the same life for which nodes will be functioning successfully, node $s_1$ will reply for each probing message issued from $s_2$. As a result, the whole number of involved probe-request/reply message before the failure of $s_1$ and $s_2$ is $n\ m\times \frac{max_i t^R_i}{min_i min_j \Delta_{j i}}= 2 \times \frac{t^{R_2}_2}{\Delta_2}$

\end{proof}

\section{Simulation results}
\label{results}

\begin{figure*}[ht!]
\begin{center}
\subfigure[Network's lifetime]{\includegraphics[width=6cm] {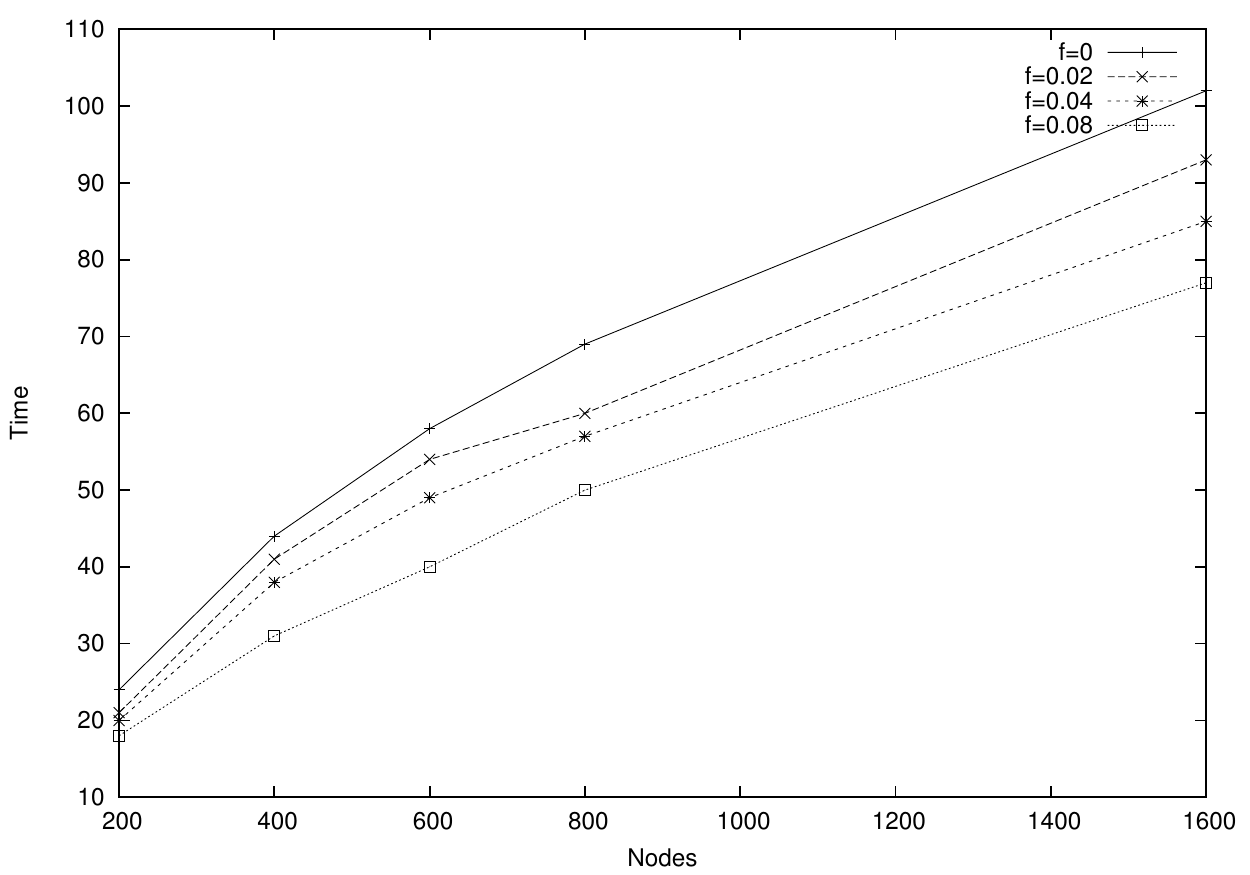}   
\label{fig1x:subfig1}}
~
\subfigure[Failures of the recovery process]{\includegraphics[width=6cm] {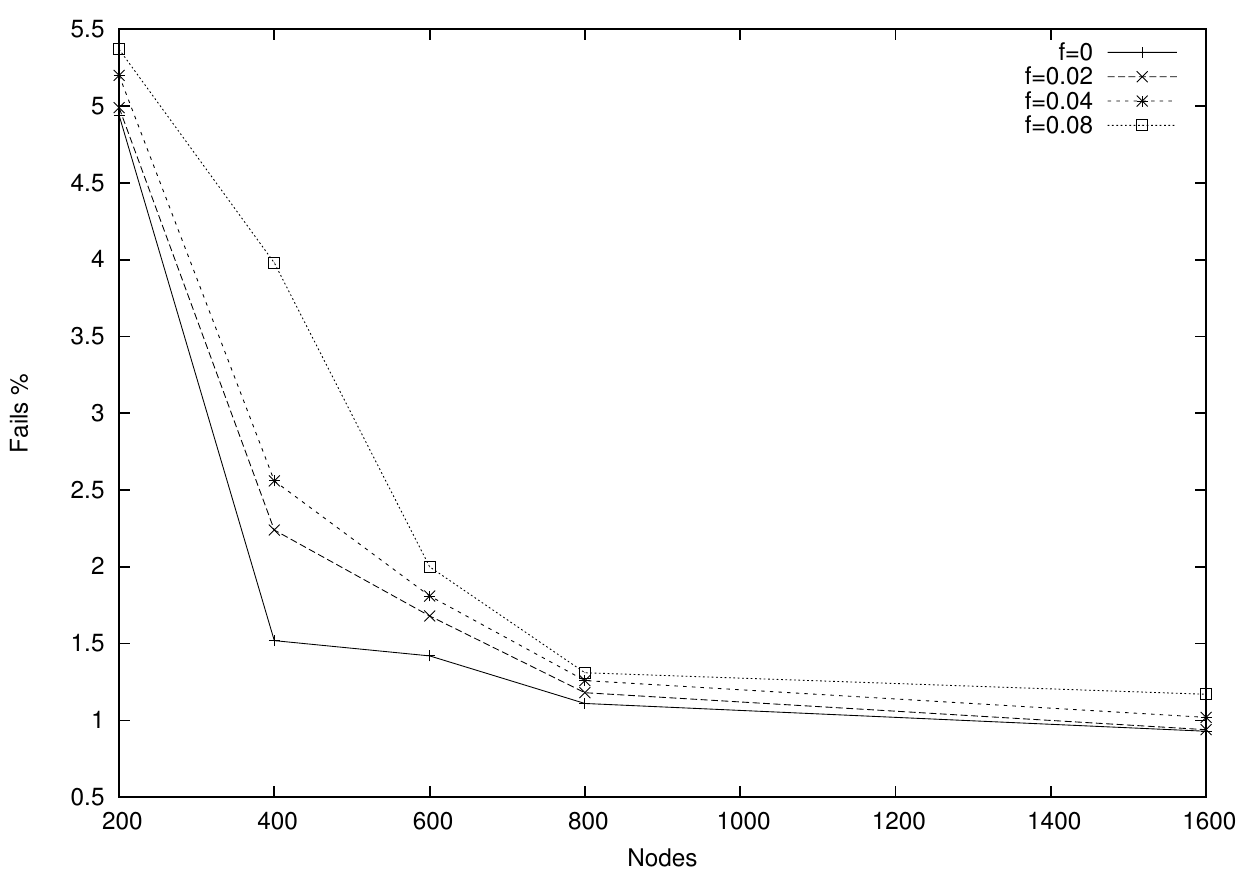}  
\label{fig1x:subfig2}}
~
\subfigure[Coverage rate]{\includegraphics[width=6cm] {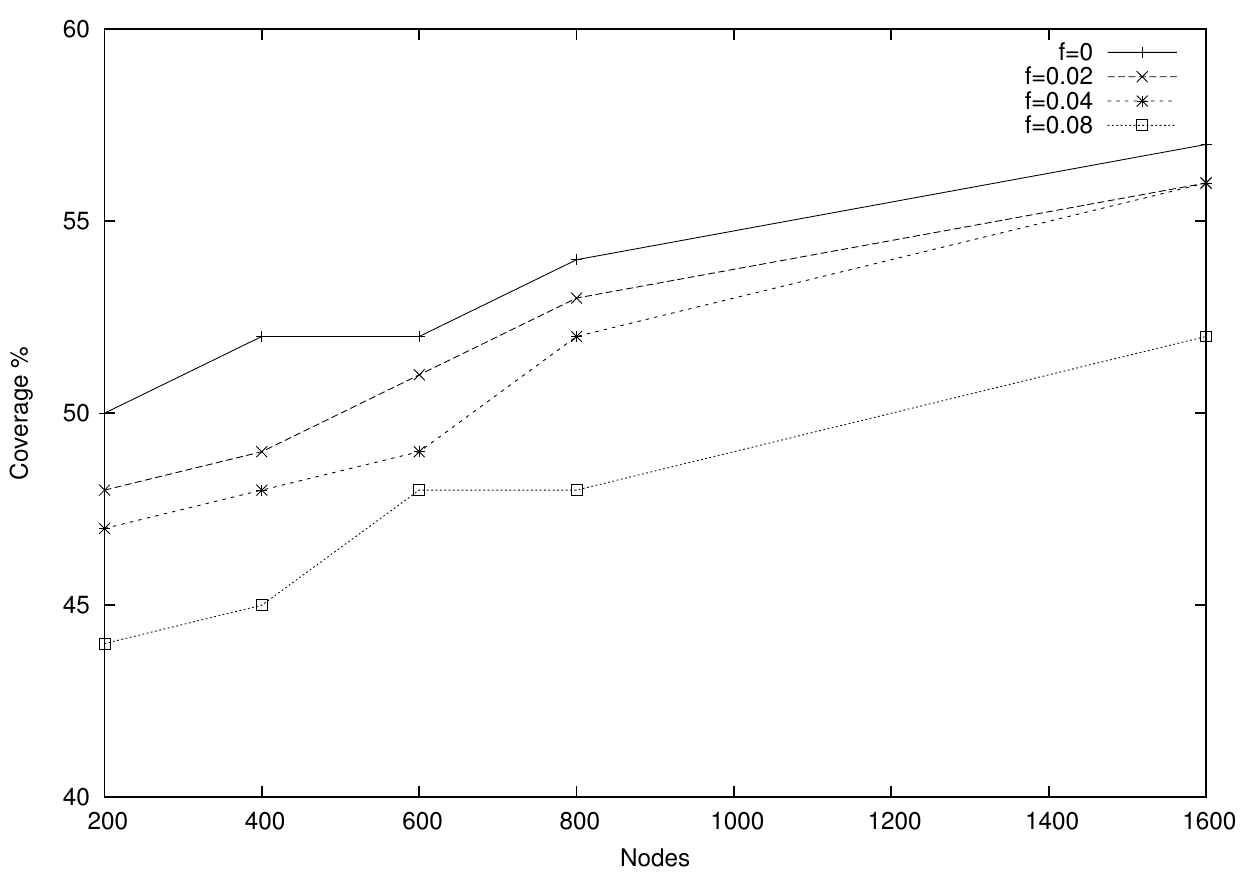}  
\label{fig1x:subfig3}}
~
\subfigure[Number of total messages in the network]{\includegraphics[width=6cm] {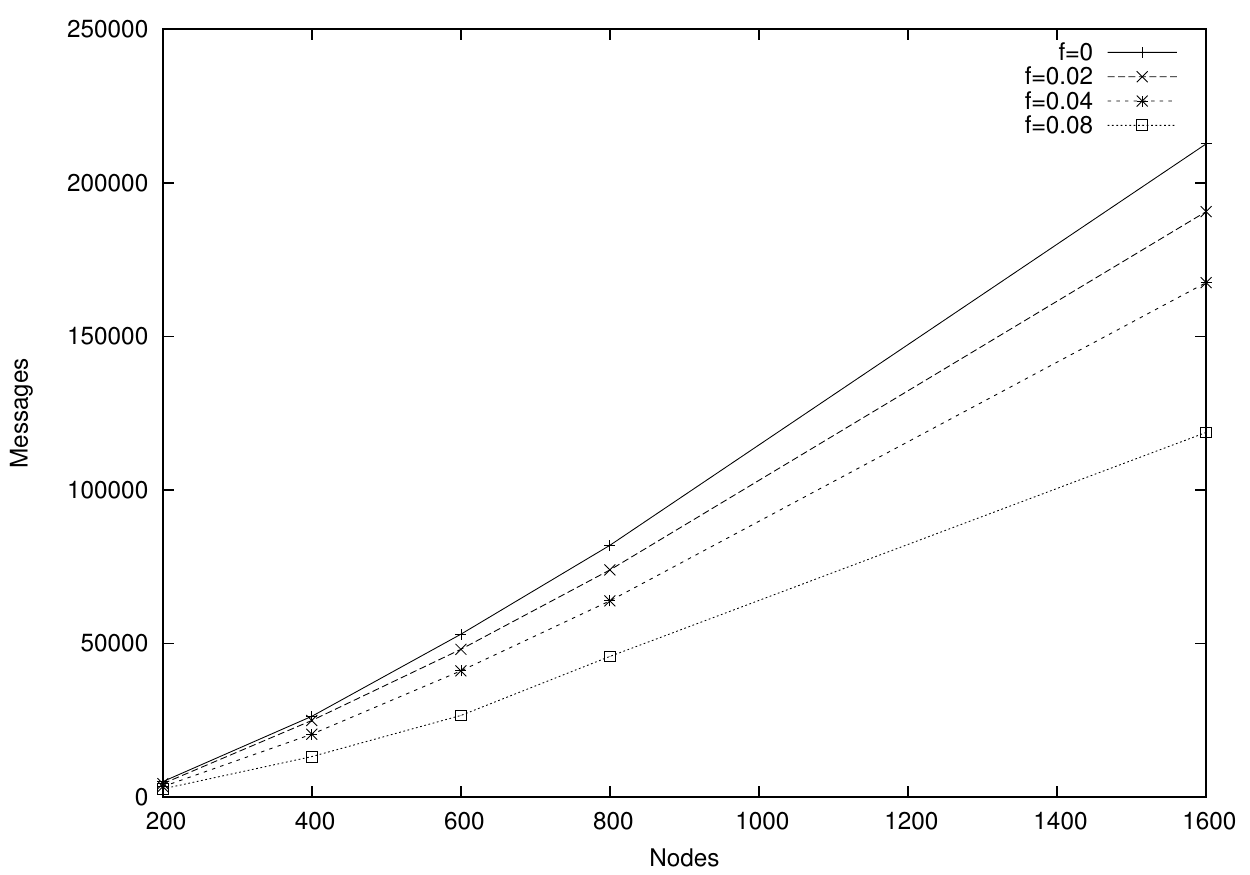}   
\label{fig1x:subfig4}}
~
\subfigure[Number of total wake-ups in the network]{\includegraphics[width=6cm] {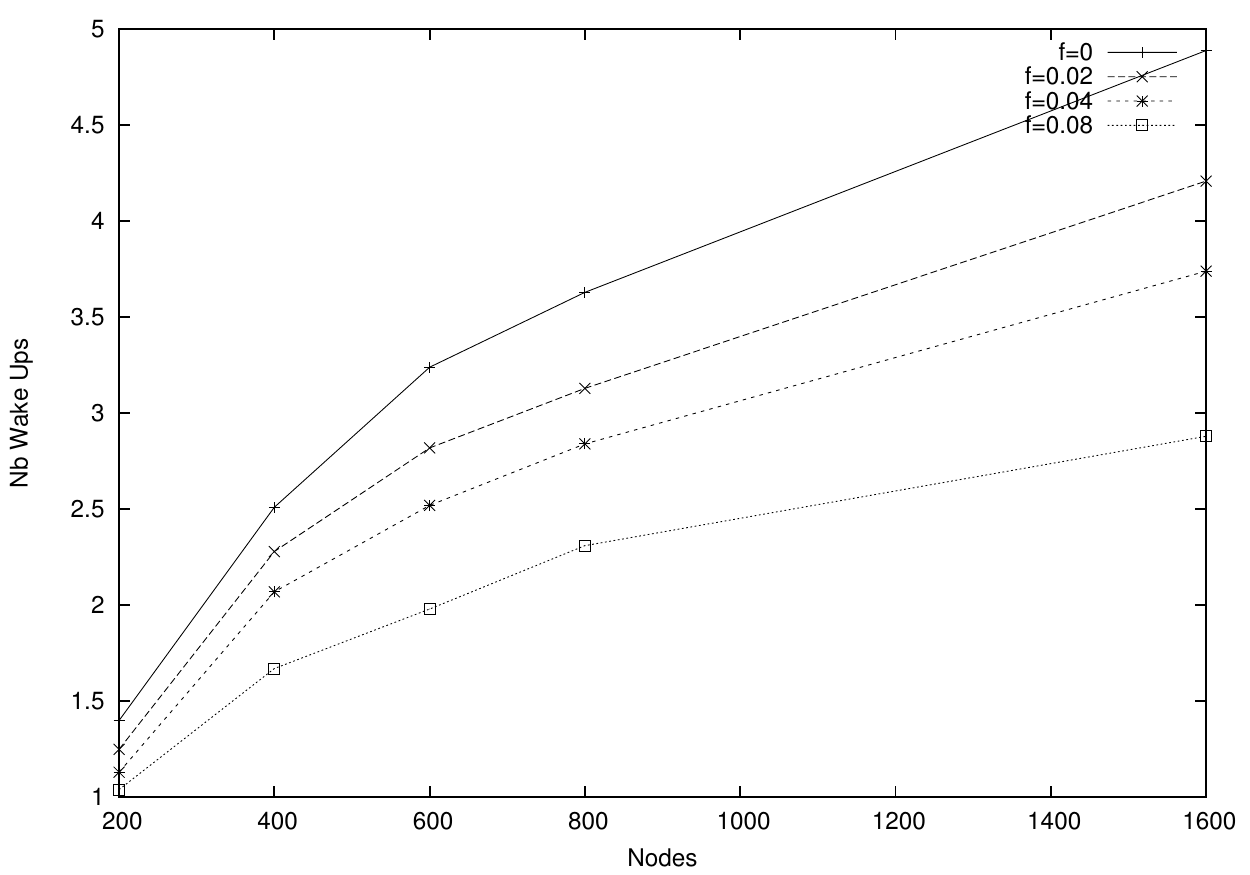}
\label{fig1x:subfig5}}
\caption{Performance evaluation with average (1x) wake up rate.}
\label{Fig1x}
\end{center}
\end{figure*} 

\begin{figure*}[ht!]
\begin{center}
\subfigure[Network's lifetime]{\includegraphics[width=6cm] {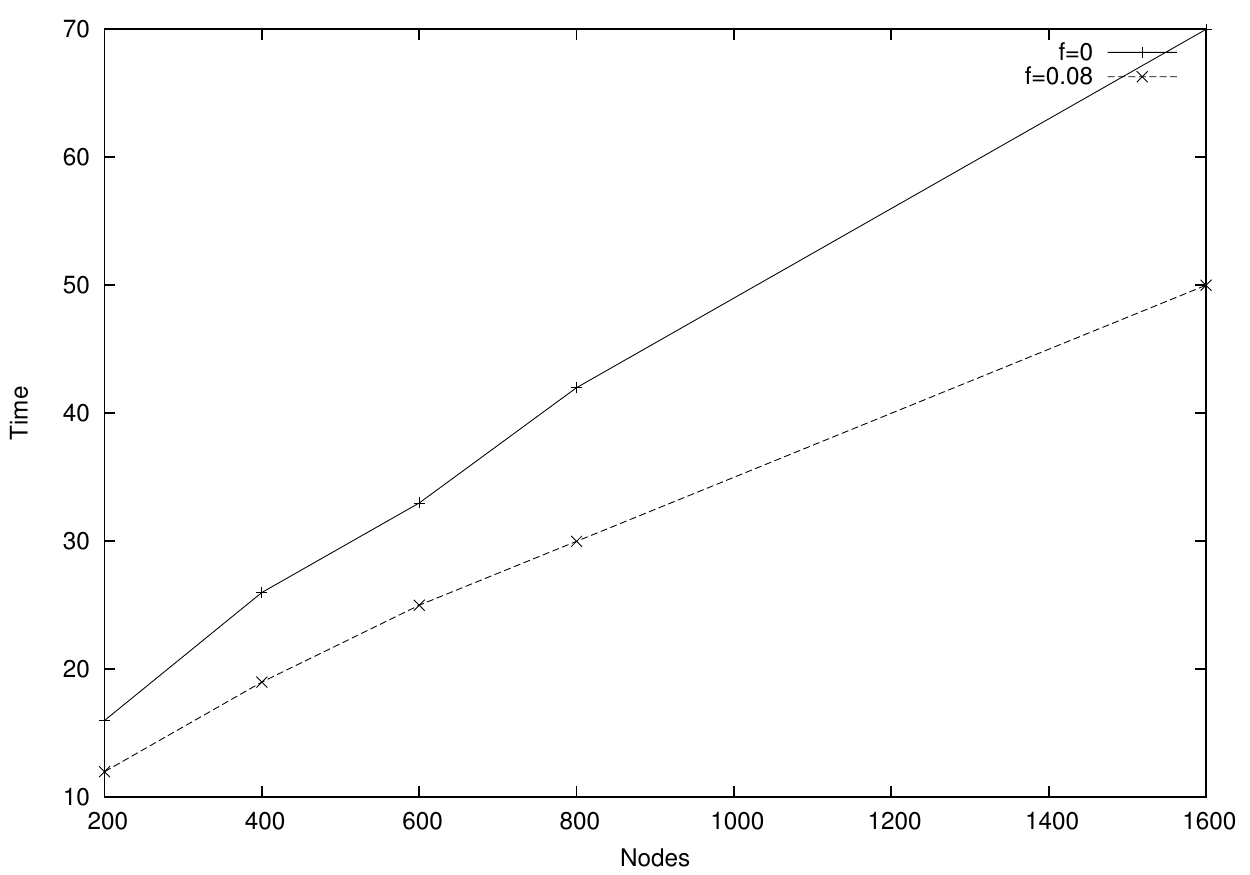}   
\label{fig4x:subfig1}}
\subfigure[Failures of the recovery process]{\includegraphics[width=6cm] {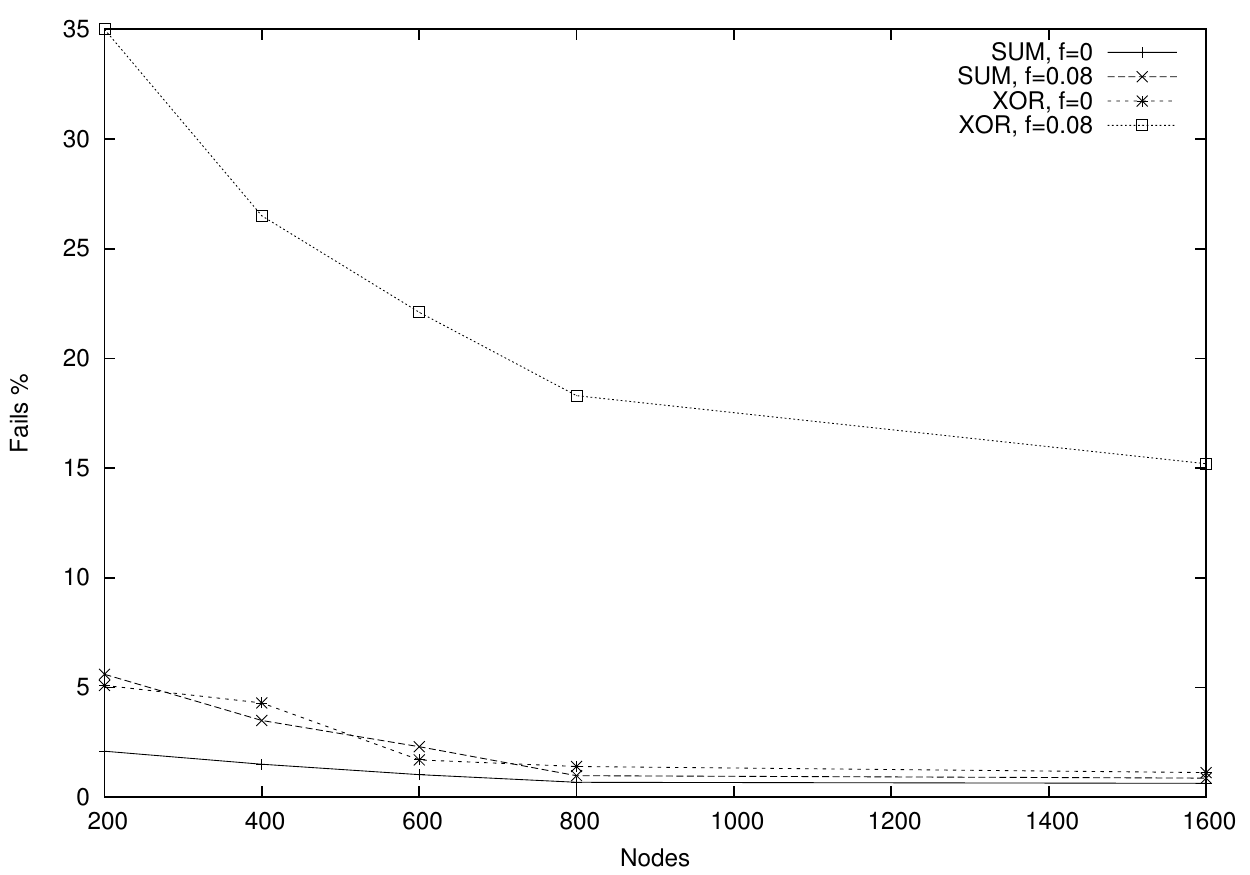}  
\label{fig4x:subfig2}}
\subfigure[Coverage rate]{\includegraphics[width=6cm] {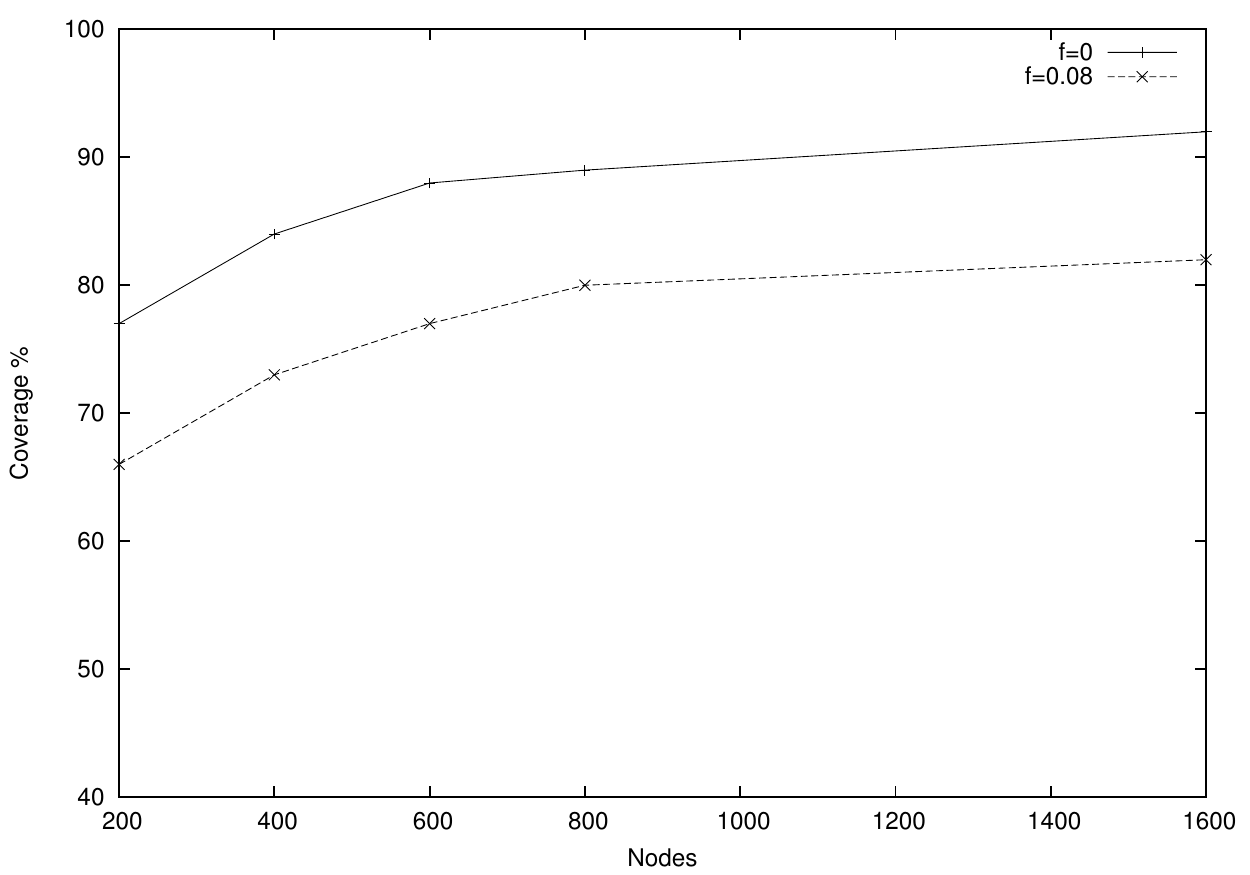}  
\label{fig4x:subfig3}}
\subfigure[Number of total messages in the network]{\includegraphics[width=6cm] {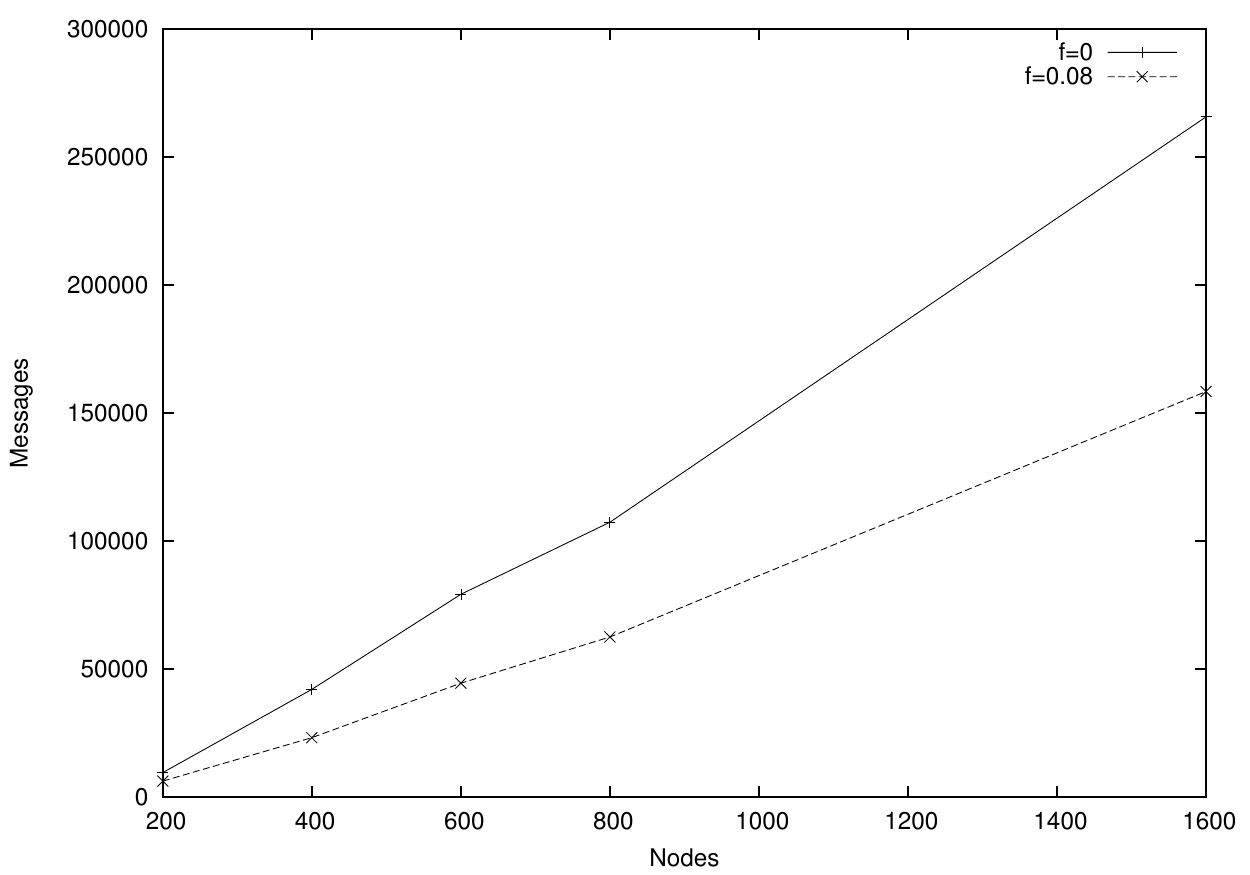}   
\label{fig4x:subfig4}}
\subfigure[Number of total wake-ups in the network]{\includegraphics[width=6cm] {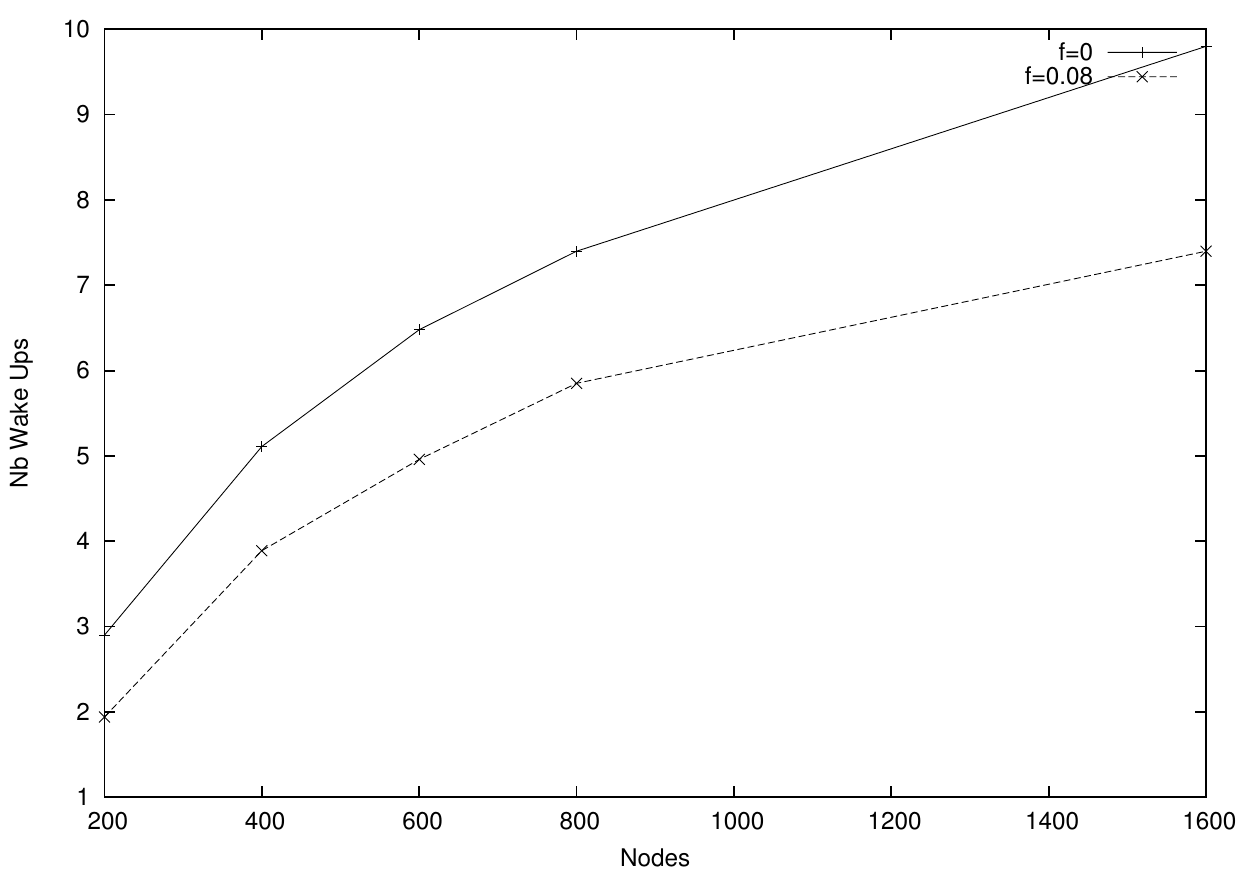}
\label{fig4x:subfig5}}
\caption{Performance evaluation with high (4x) wake up rate.}
\label{Fig4x}
\end{center}
\end{figure*}

In this section, we discuss some results through simulations. We consider a flat grid topology of $10$ by $10$ \textit{i.e.} 100 monitoring zones. We vary the number of sensors between $200$ and $1600$ nodes. Since sensors are uniformly distributed in the monitoring area, the density of sensors at each zone varies between $2$ and $16$. 

The performance evaluation considers five aspects: (i) Network lifetime evolution; (ii) Failure rate: that is the ratio of information recovery attempts that did not succeed; (iii) Effective monitoring time: this measure is related to the time between the death of the active node in a monitoring zone and its replacement; it is expressed in (\%); (iv) Total number of messages; and (v) Number of awakenings per inactive sensors.

Our simulations are performed in two different settings: the first setting sets a low wake-up rate but enough to keep the monitoring time ratio higher than 50\% in almost all configurations (see Figure \ref{Fig1x}); while the second setting considers a wake-up rate four times higher than the rate in the previous setting. This allowed us to reach monitoring time ratios up to 90\% (see Figure \ref{Fig4x}).

These two settings were put in place in order to test and compare two different solutions for data collection. In the first scenario, we assume that there is no constraint related to memory capacity of sensor nodes. Therefore, each sensor is able to save data received from all nodes in its neighborhood on a different memory register (this method is called SUM). As for the second scenario, we aim for preserving the memory space. Thus, we suppose that each node has one memory register that is available to save all information received from all its neighboring nodes; every new data packet is added in the register using the function XOR.
We can notice that the second method for data saving (XOR) is highly sensitive to any neighboring node failure. In fact, the failure of a neighbor induces a corruption of the calculation of the data needed for the recovery process. For this reason, the coverage rate needs to be high enough for this solution to work. Consequently, the XOR method is only implemented in the 4x version of our simulation settings.

\subsection{Wake-up rate = 1x}

In this section, only the SUM method is implemented. The nodes in the network are designed to fail randomly. This failure rate varies from $0\%$ to $8\%$ by a pitch of $2\%$.

In Figure \ref{fig1x:subfig1}, we can observe that the network's lifetime increases in an almost linear manner. For a small network, the number of times a sensor node receives a wake-up message is higher comparing to the same number when the network is larger. Therefore, the more nodes are participating in the coverage process, the less energy is consumed per node and the overall energy in the network is preserved.
The overall energy level has an impact on the recovery process. In fact, when the energy level start to go down, the number of nodes able to cover a given area is reduced. The remaining nodes will receive an increasing number of wake-up messages, and when they fail the number of replacements is continuously decreased. Eventually, some zones will no longer be covered. Thus, data recovery rate increases when the network is larger. Even though failures are less impacting for a dense network, the failure rate is low even for a small network (see Figure \ref{fig1x:subfig2}). 
The coverage rate (illustrated in Figure \ref{fig1x:subfig3}) is more successful when the density of the network grows. Nevertheless, it remains between the values of $45\%$ and $55\%$ depending on the settings. Indeed, as discussed above, the number of wake-up messages in the network is higher with the growth of nodes number and therefore more energy is dissipated. On the other hand, when the number of nodes is small, even though the number of wake-up messages is reduced, nodes fail faster as the number of node replacements is small. Consequently, there is no huge difference in the coverage rate. Still, a dense network guarantees a better coverage rate.
The total number of exchanged messages in the network will only grow with the increased number of nodes in the network as shown in Figure \ref{fig1x:subfig4}. In fact, each node will copy its sensed data onto each one of its neighbors. So when the number of nodes increases, the number of messages increases accordingly.
The number of wake-ups per node illustrated in Figure \ref{fig1x:subfig5} follows a logarithmic form. Sensor nodes periodically wake up to verify if there zone is being covered by an active node. The wake-up rate follows a probability function that is updated considering node failure. So, the number of these messages highly depends on the number of nodes failure. When the probability of node failure increases, the number of nodes in the network is decreased and thus the total number of wake-ups.

\subsection{Wake-up rate = 4x}

In this section, both of SUM and XOR methods are tested. Since all the curves are similar, except for failure of the recovery process, only the figure corresponding to the latter illustrates the comparison between both methods. Nodes failure rate are fixed to $0\%$ and $8\%$ only (the two extreme cases from the previous configuration).

Comparing to the previous configuration, the network's overall lifetime has decreased. Considering that the wake-up rate here is $4$ times more frequent, it is normal that network consumes more energy in this setting (see Figure \ref{fig4x:subfig1}). Nevertheless, the different zones coverage rate illustrated in Figure \ref{fig4x:subfig3} was considerably and understandably improved. The failure of the recovery process in Figure \ref{fig4x:subfig2} remains very low with the absence of memory constraints, and even lower comparing to the previous configuration. In the contrary, the XOR function appears to be highly sensitive to node failure. When the failure rate reaches $8\%$, the recovery failure jumps by $30\%$ for a small network and $15\%$ when the network is dense.
The total number of exchanged messages is considerably higher than the number in the previous configuration, and this is due to the increased number of wake-up messages. The algorithm also improves the overall energy consumption by only maintaining a necessary set of nodes in the active mode. the rest of the node wake up randomly to check their area and ensure that coverage is performed. This random function is optimized by updating it accordingly to the nodes failure rate.

\section{Conclusion}

In this paper, we proposed a fully distributed algorithm that seeks to cover data loss by maintaining a necessary set of working nodes and recovering failed ones when needed. Each sensor node copies its data onto neighbors using two different assumptions: (i) in the first one, we suggest that there is no memory constraint and each new information is copied on a different register, and (ii) in the second one we put in place a memory constraint and use the XOR function to add a new data to the common memory register for all data. We also tested two different configurations, where the wake-up rate is $4$ times more frequent from one configuration to the other. The performed simulations showed that a more frequent wake-up rate helps improve the quality of the recovery process. Even though the absence of memory constraints facilitates the recovery process, this rate was maintained below $35\%$ for a small network and around $15\%$ for a dense one even in the presence of memory constraints. This algorithm also helps preserving the energy in the network by only maintaining a necessary set of sensor nodes in the active mode. The rest of the nodes wake up randomly to ensure that their area is covered by a sensor node. This random function is optimized by updating it according to the nodes failure rate.

\nocite{*}

\bibliographystyle{compj}
\bibliography{biblio}

\end{document}